\newcommand{\defeq}{\stackrel{\rm def}{=}}
\newcommand{\prob}{\mbox{\bf{Pr}}}
\newcommand{\distr}{\mbox{$\cal{D}$}}
\newtheorem{theorem}{Theorem}
\newtheorem{lemma}[theorem]{Lemma}
\newtheorem{proposition}[theorem]{Proposition}
\newtheorem{claim}[theorem]{Claim}
\newtheorem{definition}[theorem]{Definition}
\newcommand{\bproof}{\noindent{\it Proof}}
\newcommand{\eproof}{\hspace*{\fill}$\Box$~~~~~\bigskip}
\newenvironment{proof}{\bproof. }{\eproof}
\newenvironment{psketch}{\bproof~{\it Sketch}. }{\eproof}
\title{A note on faithful coupling of Markov chains}
\author[1]{\small Debojyoti Dey}
\author[2]{\small Pranjal Dutta}
\author[1]{\small Somenath  Biswas} 
\affil[1]{\footnotesize Department of Computer Science and Engineering, Indian Institute of Technology Kanpur, Kanpur-208016.}
\affil[2]{\footnotesize Chennai Mathematical Institute, Chennai-603103.}
\date{October 2017}
\begin{document}
\maketitle
\begin{abstract}
One often needs to turn a coupling $(X_i, Y_i)_{i\geq 0}$ of a Markov chain into a {\em sticky coupling} where once
$X_T = Y_T$ at some $T$, then from then on, at each subsequent time step $T'\geq T$, we shall have $X_{T'} = Y_{T'}$. 
However, not all of what are considered couplings in literature, even Markovian couplings, can be turned into
sticky couplings, as proved by Rosenthal \cite{rosenthal} through a counter example. Rosenthal 
then proposed a strengthening of the Markovian coupling notion, termed
as {\em faithful coupling}, from which a sticky coupling can indeed be obtained. We identify the reason why
a sticky coupling could not be obtained in the counter example of Rosenthal, which motivates us to define a type of
coupling which can obviously be turned into a sticky coupling. We show then that the new type of coupling that
we define, and the  faithful coupling as defined by Rosenthal \cite{rosenthal}, are actually identical. 
Our note may be seen as a demonstration of the naturalness of the notion of faithful coupling.
\end{abstract}       

\section{Introduction}
First, we recall certain basic definitions 
and fix our notations.
For a random variable $X$, we denote its distribution as $\distr(X)$.
Let $M$ be a finite Markov chain on the state space $\Omega$, and with the transition matrix $P$. Following
\cite{norris}, we abbreviate the Markov chain 
$(X_i)_{i\geq 0}$, with $\distr(X_0) = \lambda$ and $\distr(X_{i+1}) = \distr(X_i)P$ for $i\geq 0$, as Markov$(\lambda, P)$.
A {\em coupling} of the Markov chain
$M$ is a process $(X_i, Y_i)_{i\geq 0}$, with each $X_i, Y_i$ taking values in $\Omega$ and evolving on a common probability space, where $(X_i)_{i\geq 0}$ 
satisfies the property that, for $i\geq 0$, $\distr(X_{i+1}) = \distr(X_i)P$ and similarly, 
$(Y_i)_{i\geq 0}$ satisfies, for $i\geq 0$,  $\distr(Y_{i+1}) = \distr(Y_i)P$.

{\small\bf Remark A:} {\small The definition of Markov chain coupling, as given in \cite{peres}, says that both $(X_i)_{i\geq 0}$
and $(Y_i)_{i\geq 0}$ are Markov chains with transition matrix $P$.  The reason we prefer the above weaker definition is
because in literature we find processes considered as couplings that do not satisfy the stronger notion of \cite{peres}, but
do satisfy our above definition. A Markov chain $(X_i)_{i\geq 0}$ with transition matrix $P$, by definition, satisfies the property ($\pi$) that  
for $i\geq 0$, conditioned on $X_i = l, X_{i+1}$ has the distribution $P(l, \cdot)$ and is indepedent of 
$X_0, \ldots, X_{i-1}$. As we shall see in the next section, there is a Markovian coupling $(X_i, Y_i)_{i\geq 0}$, where
$(X_i)_{i\geq 0}$ does not satisfy the property ($\pi$), although $X_i$'s do evolve (under certain conditions) as per
the transition matrix $P$.}
  
Let $\distr(X_0)$ be $\mu$ and $\distr(Y_0)$ be $\nu$. 
In the above, since each pair, $X_i$ and $Y_i$, $i\geq 0$, is
defined on the same probability space, the two random variables, $X_i, Y_i$, for each $i\geq 0$, are coupled,
and so we have, from the above definition of Markov chain coupling, and Proposition 4.7 of \cite{peres}:
\begin{equation}
\label{coupling-1}  
\|\mu P^i - \nu P^i\|_{\rm TV} = \|\distr(X_i) - \distr(Y_i)\|_{\rm TV} \leq \prob(X_i \neq Y_i)
\end{equation}
(Here, $\|\cdot\|_{\rm TV}$ denotes the total variational distance between two distributions.)
 
Let us suppose that a coupling of $M$ has the {\em `now-equals-forever'} property (\cite{rosenthal}), namely,
if $X_n = Y_n$ for some $n$, then for all $j\geq n, X_j = Y_j$.
Then, if $T$ is defined as the random time
\begin{equation}
\label{T-defn} 
T\defeq \inf\{i| X_i = Y_i\}
\end{equation}
we get from the relation (\ref{coupling-1})
\begin{equation}
\label{distance-bound} 
|\|\distr(X_i) - \distr(Y_i)\|_{\rm TV} \leq \prob(X_i \neq Y_i) = \prob(T > i)
\end{equation}
as the two events $(X_i \neq Y_i)$ and $(T > i)$ imply each other, making use of the `now-equals-forever'propery.
     
As such, the applicability of the inequality (\ref{distance-bound}) is limited as most couplings of Markov chains will
not have the `now-equals-forever' property. The way that is often resorted to to tackle this problem 
is to define a new process $(Z_i)_{i\geq 0}$ 
as follows: (\cite{rosenthal})
\begin{equation}
\label{sticking}
Z_i = \left\{ \begin{array}{ll}
       X_i  & \mbox{ if $i \leq T$}\\
       Y_i  & \mbox{ otherwise}
              \end{array} 
       \right.
\end{equation}
where $T$ is as in (\ref{T-defn}) above.
If it so happens that $(Z_i)_{i\geq 0} = \mbox{Markov}(\mu, P)$ then, since $(Z_i, Y_i)_{i\geq 0}$ is a coupling of
the Markov chain satisfying `now-equals-forever' property, 
%
%
we have
\[\|\mu P^i - \nu P^i\|_{\rm TV} = \|\distr(Z_i) - \distr(Y_i)\|_{\rm TV} \leq \prob(T > i)\]
The construction of $(Z_i)$ has been evocatively termed as {\em sticking} and the resultant new coupling
as a {\em sticky} coupling \cite{hirscher}.
\begin{definition}[Markovian coupling of a Markov chain]
A coupling $(X_i, Y_i)_{i\geq 0}$ of a Markov chain $M$ on state space $\Omega$, and with its transition 
matrix as $P$, where $\distr(X_0) = \mu$ and
$\distr(Y_0) = \nu$, for given $\mu$ and $\nu$, is said to be a {\em Markovian coupling}
if $(W_i)_{i\geq 0}$, with $W_i\defeq (X_i, Y_i)$ is itself a Markov chain on state space $\Omega\times\Omega$,
with a specified joint distribution of $\distr(X_0)$ and $\distr(Y_0)$ as the distribution $\distr(W_0)$. In
other words, there is a $(\Omega\times\Omega, \Omega\times\Omega)$ transition matrix, say $Q$, such that
for {\em some} distribution $\theta_0$ on $\Omega\times\Omega$, satisfying that its marginals are $\mu = \distr(X_0)$
and $\nu = \distr(Y_0)$, we will have, for $i\geq 1$, with $\theta_i\defeq\theta_{i-1}Q$, the two marginals of
$\theta_i$ will be $\mu P^i$, that is, $\distr(X_i)$ 
and $\nu P^i$, that is $\distr(Y_i)$.
\end{definition}

{\small\bf Remark B:} {\small We note that the defintion above is specific to a joint distribution of $(X_0, Y_0)$.
The reason
for the specificity is that in 
the example given by Rosenthal (\cite{rosenthal}) of a
Markovian coupling for which sticking fails, as we note in the next section, the ability to evolve the two copies 
as per the transition matrix $P$ crucially depends on the
initial coupling of $X_0$ and $Y_0$.}

As noted in \cite{hirscher}, it has been stated at times that sticking will work for all Markovian couplings. However,
this is not correct, Rosenthal \cite{rosenthal} provides a counter example. He then provides a stronger version of Markovian
couplings, termed as {\em faithful couplings}, for which it is proved that sticking will provably result in a coupling 
of $M$ where $(Z_i)_{i\geq 0}$, as defined in (\ref{sticking}) above, will indeed be Markov$(\mu, P)$.   
We discuss Rosenthal's counterexample in Section \ref{sec2} and identify what we consider to be the reason why 
the sticking operation
fails there. This reason then motivates us in defining a stronger version of Markovian coupling (Section \ref{sec3}) for 
which
it is easy to see that the sticking will indeed work. We then prove that the new notion of
coupling that we define is actually equivalent to the notion of faithful coupling.

\section{Rosenthal's counterexample}
\label{sec2}
In his example, Rosenthal \cite{rosenthal} considers the Markov chain $M$ with state space $\Omega = \{0,1\}$, 
and with the transition matrix $P$ defined as:      
\[P = \begin{array}{c|cc}
            &0 &1 \\
            \hline
         0  &1/2  &1/2\\
         1  &1/2  &1/2
       \end{array}\] 
$(X_i), (Y_i),{i\geq 0}$  are defined as follows:
\begin{enumerate}
\item Each of $\distr(X_0), \distr(Y_0)$ is the uniform distribution on the state space $\{0,1\}$,
\item Each $\distr(Y_i), i > 0$ is also the uniform distribution, whereas for $i \geq 0, X_{i+1} \defeq X_i\oplus Y_i$,
$\oplus$ being the exclusive-or operation.    
\end{enumerate}
Thus, the joint evolution of the two copies $(X_i, Y_i)$ of the chain, on the state space $\Omega\times\Omega$, that is,
$\{(0,0), (0,1), (1,0), (1,1)\}$ is governed by the following transition matrix $Q$:
\[Q = \begin{array}{c|cccc}
              &(0,0) &(0,1) &(1,0) &(1,1)\\
        \hline
       (0,0)  &1/2   &1/2    &0    &0\\
       (0,1)  &0     &0      &1/2  &1/2\\  
       (1,0)  &0     &0      &1/2  &1/2\\  
       (1,1)  &1/2   &1/2    &0    &0
      \end{array}\] 
{\bf Convention:} The distribtion of a random variable $X$ on $\Omega = \{0,1\}$ is specified as the vector
\[[\prob(X = 0)\ \ \  \prob(X = 1)]\] 
and similarly, the distribution of a random variable $W$ on $\Omega\times\Omega$ as the vector\\
\[[\prob(W = (0,0))\ \ \  \prob(W = (0,1))\ \ \  \prob(W = (1,0))\ \ \  \prob(W = (1,1))]\]

We note that for any distribution $\sigma\defeq [p\ \ \ 1-p]$,  $\Omega = \{0,1\}$, $\sigma P$ will be the uniform
distribution. We also note that for the trivial coupling $\theta$, of $[1/2\ \ \ 1/2]$ and $[1/2\ \ \ 1/2]$, 
which is $[1/4\ \ \ 1/4\ \ \ 1/4\ \ \ 1/4]$, $\theta Q$ will again be $\theta$. Therefore, with the joint
distribution  $\theta$ of
$\distr(X_0)$ and $\distr(Y_0)$, $Q$ indeed defines a Markovian coupling of the Markov chain $M$.
 
{\small\bf Remark C:} {\small We also note that there are other joint distributions
of the same $\distr(X_0)$ and $\distr(Y_0)$, for which $Q$ will {\em not} be a Markovian coupling: consider
$\theta'\defeq [3/8\ \ \ 1/8\ \ \ 1/8\ \ \ 3/8)$, $\theta'Q = [3/8\ \ \ 3/8\ \ \ 1/8\ \ \ 1/8]$. The marginals of 
the latter distributions are 
$[3/4\ \ \ 1/4]$ and $[1/2\ \ \ 1/2]$
respectively. Thus, if we use the joint distribution $\theta'$ of the same two initial distributions, while $Q$ evolves $Y_i$'s 
(understandably) as per $P$, it does not evolve $X_i$'s as per $P$. Hence with the initial joint distribution $\theta'$,
$Q$ fails to be a Markovian coupling of $M$. This is why we felt that it is necessary to make explicit the initial
joint distribution in the definition of Markovian coupling of a Markov chain.}


Rosenthal proves that although $Q$ is a Markovian coupling of $M$ with $\theta$ as the initial joint distribution, the
result of the sticking operation will fail to evolve $M$ correctly: with $T$ defined as in (\ref{T-defn}) and $Z_i$'s as 
defined in (\ref{sticking}), 
\begin{align*}
&\prob(Z_0 = 1, Z_1 = 0) \\
&= \prob(T = 0, Y_0 = 1, Y_1 = 0, X_0 = 1) + \prob(T > 0, X_0 = 1, Y_0 = 0, X_1 = 0)\\
&= 1/8 + 0\\
&= 1/8
\end{align*}
However, had $Z_i$'s been evolving as per $P$ with the same initial distribution as the uniform distribution, the
probability above would have been $1/4$, and not $1/8$. Hence, sticking fails. 

We see that sticking failed with $T = 0$. As it happens, sticking will fail here for every value of $T$ (except, of
course, for $T = \infty$). The reason is provided by the Proposition below. We need the following definition:
\begin{definition}
\label{delta-distribution} 
For a state space $\Omega$ with
$s$ an element in it, $\delta_s$ denotes the distribution on $\Omega$ in which the probability of $s$ is $1$, and
(therefore,) all the other states have probability $0$.  
\end{definition}
\begin{proposition}
\label{condition}
Let $M$ be a Markov chain with $\Omega$ as its state space and $P$ as its transition matrix, and let $Q$ be a transition
matrix for the state space $\Omega\times\Omega$ that defines a Markovian coupling $(W_i = (X_i, Y_i))_{i\geq 0}$ of $M$
where $\distr(X_0) = \mu$ and $\distr(Y_0) = \nu$, where the initial joint distribution used is $\theta_0$. This Markovian
coupling of $M$ can be turned into a sticky coupling using the sticking operation if the following condition holds: for every 
$s\in \Omega$, defining $\eta_0$ as the unique
joint distribution on $\Omega\times\Omega$ with $\delta_s$ and $\delta_s$ as its two marginals, and further defining $\eta_{i+1}$
as $\eta_i Q$, for each $i\geq 0$, we have that for each $\eta_i, i\geq 0$, will be a joint distribution that has $\delta_s P^i$
and $\delta_s P^i$ as its marginals. On the other hand, if $Q$ does not satisfy the condition, then the $Q$-defined Markovian
coupling, in general, cannot be turned into a sticky coupling.
\end{proposition}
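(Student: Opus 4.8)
The plan is to handle the two assertions --- the sufficiency of the condition and its essential necessity --- separately.

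For sufficiency, I have to show that feeding the given Markovian coupling through the sticking operation (\ref{sticking}) produces a genuine sticky coupling of $M$. The pair $(Z_i,Y_i)_{i\ge 0}$ automatically has the now-equals-forever property (if $i>T$ then $Z_i=Y_i$, and $Z_T=X_T=Y_T$), and $(Y_i)_{i\ge 0}$ already evolves by $P$, so all that needs checking is that $(Z_i,Y_i)_{i\ge 0}$ is a coupling of $M$ at all, i.e.\ that $\distr(Z_i)=\mu P^i$ for every $i$. Two preliminary remarks: $T$ is a stopping time for the Markov chain $(W_i)_{i\ge 0}$, since $\{T=t\}$ is a function of $W_0,\dots,W_t$ alone; and $W_T=(X_T,X_T)$ always lies on the diagonal of $\Omega\times\Omega$. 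Fixing $i$ and a state $z$ and splitting on whether $T<i$ (where $Z_i=Y_i$) or $T\ge i$ (where $Z_i=X_i$),
\[\prob(Z_i=z)=\sum_{t=0}^{i-1}\prob(T=t,\ Y_i=z)\;+\;\prob(T\ge i,\ X_i=z).\]

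The crux is to evaluate this sum. For fixed $t<i$ I would invoke the Markov property of $(W_i)$ at the deterministic time $t$, restricted to the event $\{T=t\}\subseteq\sigma(W_0,\dots,W_t)$: conditioned on $\{T=t,\ W_t=(s,s)\}$, the process $(W_t,W_{t+1},\dots)$ is distributed as the $W$-chain started from $(s,s)$, so the conditional law of $W_i$ is $\eta_{i-t}=\delta_{(s,s)}Q^{i-t}$. The hypothesis says \emph{both} marginals of $\eta_{i-t}$ equal $\delta_s P^{i-t}$, hence $\prob(Y_i=z\mid T=t,\,W_t=(s,s))=(\delta_s P^{i-t})(z)=\prob(X_i=z\mid T=t,\,W_t=(s,s))$; multiplying by $\prob(T=t,\,W_t=(s,s))$ and summing over $s\in\Omega$ gives $\prob(T=t,\ Y_i=z)=\prob(T=t,\ X_i=z)$ for every $t<i$. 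Substituting this into the displayed identity and comparing it with the trivial partition $\prob(X_i=z)=\sum_{t=0}^{i-1}\prob(T=t,\ X_i=z)+\prob(T\ge i,\ X_i=z)$, all terms cancel in pairs, so $\distr(Z_i)=\distr(X_i)=\mu P^i$ --- the last equality because $(Q,\theta_0)$ is a Markovian coupling of $M$. This proves sufficiency.

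For necessity I would point to Rosenthal's coupling of Section~\ref{sec2} as the witness that the condition cannot be dropped. Starting there from the diagonal state $(1,1)$ one has $\eta_1=\delta_{(1,1)}Q=[1/2\ \ \ 1/2\ \ \ 0\ \ \ 0]$, whose $X$-marginal is $[1\ \ \ 0]=\delta_0$, while $\delta_1 P=[1/2\ \ \ 1/2]$; so the condition fails already at $i=1$. Since Section~\ref{sec2} shows that for that coupling the sticking operation fails to reproduce $M$ (indeed $\distr(Z_1)$ is not uniform), this $Q$ violates the condition and also cannot be turned into a sticky coupling, which is precisely the ``in general'' claim.

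The step I expect to be delicate is the cancellation in the sufficiency proof. The temptation is to aim for the much stronger conclusion that $(Z_i)_{i\ge 0}$ is itself $\mbox{Markov}(\mu,P)$; this does not follow from the present one-dimensional, diagonal-start hypothesis, and only $\distr(Z_i)=\mu P^i$ is available. The right move is therefore to use the hypothesis purely at the level of first marginals, leaning on the Markov property of the \emph{joint} chain $(W_i)$ at the coupling time $T$ and --- the key point --- on the fact that the marginal condition is imposed on the $X$-coordinate just as much as on the $Y$-coordinate, so that swapping $X$ for $Y$ from time $T$ on leaves every one-step-ahead marginal law unchanged.
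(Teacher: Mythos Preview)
Your argument is correct, and for the sufficiency direction it is genuinely different from the paper's sketch. The paper argues at the level of \emph{processes}: it asserts that the condition makes each of $(X_i)_{i\ge 0}$ and $(Y_i)_{i\ge 0}$ enjoy the strong Markov property at the stopping time $T$, so that conditioned on $T=m$ and $X_T=Y_T=s$ both $(X_{T+n})_{n\ge 0}$ and $(Y_{T+n})_{n\ge 0}$ are Markov$(\delta_s,P)$ and hence interchangeable, yielding that $(Z_i)$ and $(X_i)$ are ``indistinguishable''. You instead work at the level of one-dimensional marginals: you apply the ordinary Markov property only to the \emph{joint} chain $(W_i)$ at each deterministic time $t<i$, read off from the hypothesis that the $X$- and $Y$-marginals of $\delta_{(s,s)}Q^{i-t}$ agree, and conclude by cancellation that $\distr(Z_i)=\distr(X_i)=\mu P^i$. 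Your route is more elementary and in fact more in line with what the hypothesis actually provides: the stated condition controls only the one-time marginals $\delta_sP^n$ of the $Q$-chain started on the diagonal, not the full law of the marginal process, so the paper's assertion that $(X_{T+n})_{n\ge 0}$ is Markov$(\delta_s,P)$ goes beyond the hypothesis as written. Your closing remark that only $\distr(Z_i)=\mu P^i$ (rather than $(Z_i)=\text{Markov}(\mu,P)$) is obtainable is correct and consistent with the weak coupling definition adopted in the paper. For necessity, both you and the paper invoke Rosenthal's example; you check the failure from $(1,1)$ where the paper checks it from $(0,0)$, but the verification and conclusion are identical.
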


\begin{psketch}
The condition ensures that both 
$(X_i)_{i\geq 0}$ and $(Y_i)_{i\geq 0}$ of the Markovian coupling of $M$ satisfy the 
{\em strong Markov property} (Theorem 1.4.2, \cite{norris}) 
with respect to the stopping time $T$, as defined in (\ref{T-defn}). Therefore, 
conditioned on $T = m$ and $X_T = Y_T = s$, 
each of $(X_{T+n})_{n\geq 0}$ and $(Y_{T+n})_{n\geq 0}$ will be Markov$(\delta_s, P)$
and the former will be 
independent of $X_0, X_1,\ldots, X_T$ and the latter will be independent of $Y_0, Y_1,\ldots, Y_T$. 
Thus, $(X_{T+n})_{n\geq 0}$ and $(Y_{T+n})_{n\geq 0}$ can replace each other.
Because of
this, sticking is guaranteed to work as $(X_i)_{i\geq 0}$ and $(X_0, X_1,\ldots, X_T, Y_{T+1},\ldots)$ will be
indistinguishable. 

We now show that the counter example of Rosenthal does not satify the condition as stated in the statement of the proposition. 
As we have seen that the Markovian coupling there cannot be turned into a sticky coupling, the counter example proves the
second part of the Proposition.
Let us suppose that $T = m$ and $X_T = Y_T = 0$. Conditioned on these events, $\distr(X_T) = \distr(Y_T) = [1\ \ \ 0]$.
The only joint distribution with these two distributions as marginals is $[1\ \ \ 0\ \ \ 0\ \ \ 0]$. The joint distribution of
$X_{T+1}$ and $Y_{T+1}$ will be given by $[1\ \ \ 0\ \ \ 0\ \ \ 0]Q$, which is $[1/2\ \ \ 1/2\ \ \ 0\ \ \ 0]$. 
This gives $\distr(X_{T+1})$
as $[1\ \ \ 0]$, which is not $[1/2\ \ \ 1/2] =  \distr(X_T)P$. Thus, $Q$ fails to ensure that, conditioned as above,
$\distr(X_{T+1}) = \delta_0P$. 
On the other hand, $\distr(Y_{T+1}) = \delta_0P$, and so   
$(X_{T+n})_{n\geq 0}$ and $(Y_{T+n})_{n\geq 0}$ are different and {\em cannot} replace each other, as required for the
sticking operation to work.
\end{psketch}
 
\section{Strong Markovian coupling}
\label{sec3}
We saw in Section \ref{sec2} that sticking fails for $Q$ because it does not evolve $(X_i)_{i\geq 0}$ as per $P$ for
certain joint distributions. Therefore, the following definition of a type of coupling of Markov chains immediately 
suggests itself to correct the defect:
\begin{definition}[Strong Markovian coupling]
A coupling of a finite Markov chain $M$ with state space $\Omega$ and transition matrix $P$ is a strong Markovian
coupling if there is a transition matrix $Q$ from $\Omega\times\Omega$ to $\Omega\times\Omega$ such that for every
pair of distributions $\mu$ and $\nu$, each on $\Omega$, and for every joint distribution $\theta$ of $\mu$ and $\nu$,
$\theta Q$ will be a joint distribution of $\mu P$ and $\nu P$.
\end{definition}
From this definition it follows that:
\begin{claim}
\label{strong-markov-properties}
Let $Q$ be a strong Markovian coupling of a finite Markov chain $M$ with state space $\Omega$ and
transition matrix $P$, and for two random variables $X_0$ and $Y_0$ on $\Omega$ let $\distr(X_0), \distr(Y_0)$   
be $\mu$ and $\nu$ respectively, and let $\theta_0$ be any joint distribution of $\mu$ and $\nu$. If we define
$\theta_{i+1}$ as $\theta_iQ$, for $i\geq 0$, and $X_{i+1}$ and $Y_{i+1}$ as two random variables whose distributions
are respectively the two marginals of $\theta_{i+1}$ then 
\begin{enumerate}
\item $(X_i)_{i\geq 0}$ will be Markov$(\mu, P)$ and $(Y_i)_{i\geq 0}$ will be Markov$(\nu, P)$. Consequently,
\item $\distr(X_i) = \mu P^i$ and $\distr(Y_i) = \nu P^i$, for $i\geq 0$,
\item Both $(X_i)_{i\geq 0}$ and $(Y_i)_{i\geq 0}$ will satisfy the strong Markov property, and therefore,
the coupling $Q$ of $M$ can be turned into a valid sticky coupling through sticking.
\end{enumerate}
\end{claim}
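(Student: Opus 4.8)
The plan is to first distill the defining property of a strong Markovian coupling into a purely local statement about the rows of $Q$, and then feed that into the standard machinery for marginals of Markov chains and for the strong Markov property. The key observation comes from applying the definition with $\mu = \delta_a$ and $\nu = \delta_b$ for an arbitrary pair $a, b \in \Omega$: the only joint distribution of $\delta_a$ and $\delta_b$ is $\delta_{(a,b)}$, and $\delta_{(a,b)} Q$ is exactly the row $Q((a,b), \cdot)$, while $\delta_a P = P(a, \cdot)$ and $\delta_b P = P(b, \cdot)$. Hence the definition forces, for every $(a,b) \in \Omega\times\Omega$, that the first marginal of $Q((a,b),\cdot)$ equals $P(a,\cdot)$ and the second marginal equals $P(b,\cdot)$ --- in particular the first marginal does not depend on $b$, nor the second on $a$. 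I expect this single fact to drive everything else.

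For item 1, I would read the construction in the claim as building the chain $(W_i)_{i\ge 0} = \mbox{Markov}(\theta_0, Q)$ on $\Omega\times\Omega$ with $X_i, Y_i$ its two coordinates; this is consistent with the statement, since then $\distr(W_i) = \theta_i = \theta_0 Q^i$ and the marginals of $\theta_i$ are $\distr(X_i)$ and $\distr(Y_i)$. Writing $\mathcal{F}_i \defeq \sigma(W_0, \ldots, W_i)$, the key observation gives $\prob(X_{i+1} = x' \mid \mathcal{F}_i) = \sum_{y'} Q(W_i, (x', y')) = P(X_i, x')$ almost surely; since $\sigma(X_0,\ldots,X_i) \subseteq \mathcal{F}_i$, the tower property yields $\prob(X_{i+1} = x' \mid X_0,\ldots,X_i) = P(X_i, x')$, which together with $\distr(X_0) = \mu$ is precisely property ($\pi$). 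Thus $(X_i)_{i\ge0} = \mbox{Markov}(\mu, P)$, and symmetrically $(Y_i)_{i\ge0} = \mbox{Markov}(\nu, P)$; the same computation also shows that $(X_i)$ is Markov with transition matrix $P$ with respect to the larger filtration $(\mathcal{F}_i)$, which I will use for item 3. Item 2 is then immediate, since a $\mbox{Markov}(\lambda,P)$ chain has $\distr(X_i) = \lambda P^i$ for all $i$ (see \cite{norris}).

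For item 3, each of $(X_i)$ and $(Y_i)$ satisfies the strong Markov property by Theorem 1.4.2 of \cite{norris}. To show that sticking then works, I would note that $T \defeq \inf\{i : X_i = Y_i\}$ is an $(\mathcal{F}_i)$-stopping time while $(X_i)$ is $\mbox{Markov}(\mu,P)$ with respect to $(\mathcal{F}_i)$, so the strong Markov property applies along $(\mathcal{F}_i)$: conditioned on $\{T = m,\ X_T = Y_T = s\}$, the process $(X_{m+n})_{n\ge0}$ is $\mbox{Markov}(\delta_s, P)$ and independent of $\mathcal{F}_m$, and likewise $(Y_{m+n})_{n\ge0}$ is $\mbox{Markov}(\delta_s, P)$ and independent of $\mathcal{F}_m$; in particular both are independent of $(X_0,\ldots,X_m)$. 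Hence, on that event, each of the sequences $(X_0,\ldots,X_m,X_{m+1},\ldots)$ and $(X_0,\ldots,X_m,Y_{m+1},\ldots)$ is distributed as the product of the conditional law of $(X_0,\ldots,X_m)$ with $\mbox{Markov}(\delta_s, P)$, so they agree. Summing over all $m \ge 0$ and $s \in \Omega$, and adding the event $\{T = \infty\}$ on which $Z_i = X_i$ for every $i$ by (\ref{sticking}), this would give $\distr\big((Z_i)_{i\ge0}\big) = \distr\big((X_i)_{i\ge0}\big) = \mbox{Markov}(\mu,P)$. Since moreover $Z_i = X_i \ne Y_i$ for $i < T$ while $Z_i = Y_i$ for $i \ge T$, the coupling $(Z_i, Y_i)_{i\ge0}$ has the now-equals-forever property, so sticking produces a valid sticky coupling.

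The substantive step should be the key observation: once one notices that testing the strong-coupling condition on point masses pins down the row-marginals of $Q$, items 1 and 2 collapse to textbook facts. The place I expect to need the most care is item 3, because $T$ is a stopping time for neither $(X_i)$ nor $(Y_i)$ on its own, so the strong Markov property must be invoked along the joint filtration $(\mathcal{F}_i)$; and although $(X_{T+n})_{n\ge0}$ and $(Y_{T+n})_{n\ge0}$ remain coupled to each other after time $T$, each is individually $\mbox{Markov}(\delta_s, P)$ and independent of its own pre-$T$ history, which is exactly what the substitution in (\ref{sticking}) needs.
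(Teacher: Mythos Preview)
The paper does not actually supply a proof of this claim: it is simply asserted as following ``from this definition'' and then the paper moves on to the equivalence with faithful couplings. Your proposal is correct and fills that gap. Your key observation --- testing the strong-coupling condition on the point masses $\delta_a$, $\delta_b$ to pin down the row-marginals of $Q$ --- is exactly the device the paper deploys \emph{later}, in the proof of Lemma~9, to show that every strong Markovian coupling is faithful; you are simply using it earlier and for a different purpose. Your derivation of items~1 and~2 from that observation via the tower property is standard and correct.

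For item~3 your argument is in the same spirit as the paper's proof sketch of Proposition~\ref{condition}, but more careful on one point the paper elides: $T$ is a stopping time for the joint filtration $(\mathcal{F}_i)$ but not for either coordinate chain alone, so the strong Markov property must be invoked for $(X_i)$ and $(Y_i)$ as Markov chains \emph{relative to} $(\mathcal{F}_i)$, which you establish in item~1. The paper's sketch simply says the condition ``ensures that both $(X_i)$ and $(Y_i)$ satisfy the strong Markov property with respect to $T$'' and that the two post-$T$ processes ``can replace each other''; your version makes explicit why that replacement is legitimate at the level of laws. So your proof is essentially what the paper would have written had it written one, with the filtration subtlety handled more precisely.
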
 
We see next that the strong Markovian coupling is actually equivalent to the familiar notion of {\em faithful coupling}
which is defined as:
\begin{definition}[Faithful coupling \cite{rosenthal}]
A Markov coupling of a finite Markov chain $M$, with $\Omega$ as its state space and $P$ as its transition matrix, is a 
faithful coupling given by a Markov chain $(W_i)_{i\geq 0} = (X_i, Y_i)_{i\geq 0}$, on state space $\Omega\times\Omega$,
with transition matrix $Q$, if $Q$ satisfies, for all $i, j, i', j'\in\Omega$, the following
\[\sum_{j'\in\Omega}Q((i,j), (i',j')) = P(i,i'),\mbox{ and }\]
\[\sum_{i'\in\Omega}Q((i,j), (i',j')) = P(j,j')\] 
Equivalently, for all $t\geq 0$ and all $i, j, i', j'\in\Omega$
\[\Pr(X_{t+1} = i'|X_t = i, Y_t = j) = P(i,i'), \mbox{ and }\]
\[\Pr(Y_{t+1} = j'|X_t = i, Y_t = j) = P(j,j')\]
\end{definition}

{\small\bf Remark D:} {\small This kind of coupling has been termed in \cite{peres}, as well as in \cite{mitzen}, as
Markovian coupling; however, we follow the terminology of \cite{hirscher} for reasons given therein.
It is to be noted that most coupling constructions of Markov chains turn out to be faithful couplings, as the various
coupling examples in \cite{peres} demonstrate.}
\begin{proposition}
A Markov chain coupling is faithful if and only if it is a strong Markovian coupling.
\end{proposition}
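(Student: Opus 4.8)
The plan is to prove the two directions separately, with the backward direction being essentially immediate and the forward direction requiring a little linear-algebra argument about transition matrices acting on joint distributions.

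\textbf{Backward direction (strong Markovian $\Rightarrow$ faithful).} Suppose $Q$ is a strong Markovian coupling. I would pick an arbitrary pair of states $(i,j)\in\Omega\times\Omega$ and instantiate the definition with $\mu=\delta_i$, $\nu=\delta_j$. The only joint distribution of $\delta_i$ and $\delta_j$ is $\theta=\delta_{(i,j)}$, i.e.\ the row vector that is $1$ in coordinate $(i,j)$ and $0$ elsewhere. Then $\theta Q$ is exactly the $(i,j)$-th row of $Q$, namely $Q((i,j),\cdot)$. By the strong Markovian property this must be a joint distribution whose first marginal is $\delta_i P = P(i,\cdot)$ and whose second marginal is $\delta_j P = P(j,\cdot)$. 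Writing out what "marginals" means gives precisely $\sum_{j'}Q((i,j),(i',j'))=P(i,i')$ and $\sum_{i'}Q((i,j),(i',j'))=P(j,j')$, which is the defining condition of a faithful coupling. Since $(i,j)$ was arbitrary we are done; the equivalent conditional-probability form then follows from the definition of the Markov chain $(W_i)$.

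\textbf{Forward direction (faithful $\Rightarrow$ strong Markovian).} Suppose $Q$ satisfies the faithful-coupling marginal equations for every $(i,j)$. Fix arbitrary $\mu,\nu$ on $\Omega$ and an arbitrary joint distribution $\theta$ with those marginals; I must show $\theta Q$ is a joint distribution of $\mu P$ and $\nu P$. The key observation is that the first-marginal map and the operation "multiply by $Q$" are both linear in $\theta$, so it suffices to check the claim on the extreme points $\delta_{(i,j)}$ and then take convex combinations. Concretely, compute the first marginal of $\theta Q$: it is $\sum_{i',j'}\bigl(\sum_{i,j}\theta(i,j)Q((i,j),(i',j'))\bigr)$ restricted to coordinate $i'$, i.e.\ $\sum_i\sum_j\theta(i,j)\bigl(\sum_{j'}Q((i,j),(i',j'))\bigr)=\sum_i\sum_j\theta(i,j)P(i,i')=\sum_i\mu(i)P(i,i')=(\mu P)(i')$, where I used the faithful marginal identity and then that $\sum_j\theta(i,j)=\mu(i)$ since $\mu$ is the first marginal of $\theta$. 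The symmetric computation gives the second marginal of $\theta Q$ equal to $\nu P$. Hence $\theta Q$ is a joint distribution of $\mu P$ and $\nu P$, so $Q$ is a strong Markovian coupling.

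\textbf{Where the work is.} Neither direction is genuinely hard; the only thing that needs care is bookkeeping in the double sum for the forward direction — making sure the marginalization over $j'$ is carried out before the sum over $j$ so that the faithful identity applies row-by-row, and being explicit that $\sum_j\theta(i,j)=\mu(i)$ is exactly the hypothesis that $\theta$ has marginal $\mu$. One should also note at the start of the proof that a strong Markovian coupling, restricted to $\delta$-distributions, is automatically a Markovian coupling in the sense of the earlier definition (so the two notions being compared really do live in the same universe), which is what lets the backward direction land inside the faithful-coupling definition cleanly. I would present the backward direction first since it motivates why testing on point masses is the right move, then reuse that same linearity idea in the forward direction.
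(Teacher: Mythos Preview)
Your proposal is correct and follows essentially the same approach as the paper: the strong Markovian $\Rightarrow$ faithful direction is obtained by plugging in point masses $\delta_{(i,j)}$ so that $\theta Q$ becomes the $(i,j)$-th row of $Q$, and the faithful $\Rightarrow$ strong Markovian direction is the same marginal computation $\sum_{y}(\theta Q)(x,y)=\sum_{u,v}\theta(u,v)\sum_{y}Q((u,v),(x,y))=\sum_{u,v}\theta(u,v)P(u,x)=(\mu P)(x)$. The only cosmetic difference is the order of presentation and your aside about linearity/extreme points, which the paper does not make explicit.
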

The two lemmas below prove the two directions of the above proposition.
\newpage
\begin{lemma} 
Every faithful coupling of a Markov chain $M$ is a strong Markovian coupling.  
\end{lemma}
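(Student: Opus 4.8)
## Proof Plan

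The plan is to unwind the definitions directly. Suppose $Q$ is the transition matrix of a faithful coupling of $M$; we must show it is a strong Markovian coupling, i.e.\ that for \emph{every} pair of distributions $\mu,\nu$ on $\Omega$ and \emph{every} joint distribution $\theta$ of $\mu$ and $\nu$, the distribution $\theta Q$ has marginals $\mu P$ and $\nu P$. The faithfulness hypothesis gives us the two marginal identities $\sum_{j'}Q((i,j),(i',j')) = P(i,i')$ and $\sum_{i'}Q((i,j),(i',j')) = P(j,j')$ for all $i,j,i',j'$. So the whole statement reduces to a short computation: fix any joint distribution $\theta$ with marginals $\mu,\nu$, and compute the first marginal of $\theta Q$ at a state $i'\in\Omega$.

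The key computation I would carry out is
\[
(\theta Q)(\text{first coord} = i') \;=\; \sum_{i',\,\text{then sum } j'}\sum_{(i,j)} \theta(i,j)\,Q((i,j),(i',j')).
\]
Reorganizing the sums, this equals $\sum_{(i,j)} \theta(i,j)\sum_{j'} Q((i,j),(i',j'))$, and by the first faithfulness identity the inner sum is $P(i,i')$, independent of $j$. Hence the expression becomes $\sum_{i}\bigl(\sum_{j}\theta(i,j)\bigr) P(i,i') = \sum_i \mu(i) P(i,i') = (\mu P)(i')$, using that the first marginal of $\theta$ is $\mu$. The symmetric computation with the second faithfulness identity shows the second marginal of $\theta Q$ is $\nu P$. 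Since $\theta$ and the pair $(\mu,\nu)$ were arbitrary, this is exactly the defining property of a strong Markovian coupling.

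There is essentially no obstacle here: the lemma is the "easy" direction, and its content is simply that the marginal conditions defining faithfulness are strong enough that they constrain the evolution of \emph{each} coordinate regardless of how the two coordinates are jointly distributed. The one point worth stating carefully is why the inner sum $\sum_{j'}Q((i,j),(i',j'))$ does not depend on $j$ — this is immediate from the faithfulness identity, and it is precisely the feature that fails for the matrix $Q$ in Rosenthal's counterexample (there the row-sums do depend on the "other" coordinate). I would also remark that this computation reproves, as a special case, that $\theta_i Q$ evolves the marginals as $\mu P^i$ and $\nu P^i$ for the specific $\theta_0$ of an ordinary Markovian coupling; the strength of faithfulness is exactly that it removes the dependence on the choice of $\theta_0$. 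The converse direction, that a strong Markovian coupling is faithful, is the one requiring a genuine argument (one recovers the marginal identities by feeding in point-mass joint distributions $\delta_{(i,j)}$), but that is the content of the next lemma.
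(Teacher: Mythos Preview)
Your proposal is correct and follows essentially the same approach as the paper: both compute the first marginal of $\theta Q$ at a fixed target state, expand $(\theta Q)$ as a double sum, swap the order of summation, invoke the faithfulness identity $\sum_{j'}Q((i,j),(i',j'))=P(i,i')$ to eliminate the inner sum, and then use that $\sum_j\theta(i,j)=\mu(i)$ to arrive at $(\mu P)(i')$. The only difference is cosmetic (the paper uses $(u,v,x,y)$ where you use $(i,j,i',j')$), and your additional remarks about independence from $j$ and the contrast with Rosenthal's example are accurate commentary rather than a different method.
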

\begin{proof}
Let $M$ be a Markov chain with state space $\Omega$ and transition matrix $P$. Let the transition matrix $Q$, giving
transition probabilities from $\Omega\times\Omega$ to $\Omega\times\Omega$, define a faithful coupling of the Markov
chain $M$. Let $\mu$ and $\nu$ be two distributions on $\Omega$, and let $\theta$ be a joint distribution of
$\mu$ and $\nu$. We need to prove that $\mu P$ and $\nu P$ are the two marginals of $\theta Q$. In particular, we
need to show:
\begin{equation}
\label{for-x}
\mbox{For all } x\in\Omega, \sum_{y\in\Omega}(\theta Q)(x,y) = (\mu P)(x)
\end{equation}
and,
\begin{equation}
\label{for-y}
\mbox{For all } y\in\Omega, \sum_{x\in\Omega}(\theta Q)(x,y) = (\nu P)(y)
\end{equation}
We prove (\ref{for-x}) as follows:
\begin{align*}
&\sum_{y\in\Omega}(\theta Q)(x,y)\\
&=\sum_{y\in\Omega}\sum_{(u,v)\in\Omega\times\Omega}\theta(u,v)Q((u,v), (x,y))\\
&=\sum_{(u,v)\in\Omega\times\Omega}\theta(u,v)\sum_{y\in\Omega}Q((u,v), (x,y))\\
&=\sum_{(u,v)\in\Omega\times\Omega}\theta(u,v)P(u,x), \mbox{ because $Q$ defines a faithful coupling}\\
&=\sum_{u\in\Omega}P(u,x)\sum_{v\in\Omega}\theta(u,v)\\
&=\sum_{u\in\Omega}P(u,x)\mu(u), \mbox{ $\theta$ being the joint ditribution of $\mu$ and $\nu$}\\
&=(\mu P)(x)
\end{align*}
In a similar manner we can prove (\ref{for-y}).
\end{proof}

For the other direction, we prove
\begin{lemma}
Every strong Markovian coupling of a Markov chain $M$ is a faithful coupling of $M$.
\end{lemma}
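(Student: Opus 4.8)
The plan is to prove the contrapositive-free direction directly: assume $Q$ is a strong Markovian coupling and show that its row sums satisfy the two marginal identities $\sum_{j'\in\Omega}Q((i,j),(i',j')) = P(i,i')$ and $\sum_{i'\in\Omega}Q((i,j),(i',j')) = P(j,j')$ for every $i,j,i',j'\in\Omega$. The key observation is that the definition of strong Markovian coupling quantifies over \emph{every} joint distribution $\theta$ of \emph{every} pair $\mu,\nu$, so in particular we are free to feed it the point-mass joint distributions, which is exactly what isolates a single row of $Q$.

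Concretely, first I would fix arbitrary $i,j\in\Omega$ and apply the strong Markovian coupling hypothesis with $\mu = \delta_i$ and $\nu = \delta_j$ and $\theta = \delta_{(i,j)}$, the unique joint distribution of $\delta_i$ and $\delta_j$. Then $\theta Q$ is simply the row $Q((i,j),\cdot)$, and the hypothesis tells us this row is a joint distribution whose first marginal is $\delta_i P = P(i,\cdot)$ and whose second marginal is $\delta_j P = P(j,\cdot)$. Writing out the first marginal gives, for every $i'\in\Omega$, $\sum_{j'\in\Omega}Q((i,j),(i',j')) = (\delta_i P)(i') = P(i,i')$, which is the first faithfulness identity; the second marginal gives the second identity in the same way. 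Since $i,j$ were arbitrary, this establishes faithfulness.

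I should also spell out briefly why $\delta_{(i,j)}$ really is \emph{a} joint distribution of $\delta_i$ and $\delta_j$ (its marginals are the indicator of $\{i\}$ and the indicator of $\{j\}$ respectively) so that the hypothesis genuinely applies; this is the analogue of the ``$\theta$ being the joint distribution of $\mu$ and $\nu$'' step in the previous lemma, just trivial in the point-mass case. With that in hand, combining this lemma with the previous one yields both directions of the Proposition, and moreover Claim~\ref{strong-markov-properties} then applies to any faithful coupling, recovering Rosenthal's conclusion that faithful couplings can be stickied.

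I do not expect a genuine obstacle here: the whole content is the single substitution $\theta = \delta_{(i,j)}$, and the rest is unwinding definitions of marginals and of $\delta_s P$. The only thing to be careful about is bookkeeping — keeping the two marginals straight (which index is summed out corresponds to which copy of the chain) and making sure the universal quantifiers in the strong Markovian coupling definition are invoked at the right level of generality (every $\mu$, every $\nu$, every $\theta$), since it is precisely that generality, as opposed to the single-$\theta_0$ version in the Markovian coupling definition, that makes the argument go through.
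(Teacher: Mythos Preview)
Your proposal is correct and is essentially identical to the paper's own proof: the paper also fixes an arbitrary pair $(u,v)\in\Omega\times\Omega$, applies the strong Markovian coupling hypothesis to the point-mass joint distribution $\delta_{(u,v)}$ (whose marginals are $\delta_u$ and $\delta_v$), and reads off that the row $Q((u,v),\cdot)=\delta_{(u,v)}Q$ has marginals $P(u,\cdot)$ and $P(v,\cdot)$, which are exactly the faithfulness identities. Your bookkeeping remarks about why $\delta_{(i,j)}$ is a valid joint distribution and about the role of the universal quantifier over $\theta$ are apt and match the paper's implicit reasoning.
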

\begin{proof} 
Let $M$ be a Markov chain with state space $\Omega$ and transition matrix $P$. Let the transition matrix $Q$, giving
transition probabilities from $\Omega\times\Omega$ to $\Omega\times\Omega$, define a strong Markovian coupling of the Markov
of $M$. For any $(u, v)\in\Omega\times\Omega$, consider the probabilty distribution $\delta_{(u,v)}$. This distribution is the
joint distribution $\delta_u$ and $\delta_v$, both defined on $\Omega$. (The definition of $\delta_x$ as in Definition 2.)
$\delta_uP$ is the $u$th row of $P$, namely, $P(u,\cdot)$. Similarly,   
$\delta_vP$ will be the $v$th row of $P$, namely, $P(v,\cdot)$, and $\delta_{(u,v)}Q$ will be the $(u,v)$th row of $Q$,
that is, $Q((u,v),\cdot)$. As $Q$ defines a strong Markovian coupling of $M$, we have that $Q((u,v),\cdot)$ will be the
joint distribution of $P(u,\cdot)$ and $P(v,\cdot)$. From this condition, we get for any $x, u, v\in\Omega$:
\begin{equation}
\label{faithful-1}
\sum_{y\in\Omega}Q((u,v),(x,y)) = P(u,x)
\end{equation}
Similarly, for any $y, u, v\in\Omega$, we get   
\begin{equation}
\label{faithful-2}
\sum_{x\in\Omega}Q((u,v),(x,y)) = P(v,y)
\end{equation}
As (\ref{faithful-1}) and (\ref{faithful-2}) are precisely the conditions to be met by $Q$ to be a faithful coupling, 
we conclude
that $Q$ defines a faithful coupling of $M$.
\end{proof}
\section{Concluding remarks}
We have seen that it is sufficient for a Markovian coupling to have the faithfulness property in order
to be turned to a sticky 
coupling. Is the faithfulness property also a {\em necessary} property? It may be that there is a Markovian coupling
of a Markov chain $M$ which evolves two copies of the chain correctly for a pair of initial distributions $\mu, \nu$ of
$M$, using a 
joint distribution of the pair, and satisfies the condition in the statement of Proposition \ref{condition}, but it either
does not work for some other joint distribution of the same pair $\mu$ and $\nu$, or, does not satisfy the condition for
some other pair of initial distributions of $M$. Such a Markovian coupling can be turned into a sticky coupling for the
$\mu, \nu$ pair, but will not be strongly Markovian, and hence will not be a faithful coupling of $M$. One feels that 
even if such an example exists, it is unlikely to be a natural example. Faithfulness appears to be the only natural
strengthening of the Markovian coupling notion that ensures that the sticking operation will work.


\begin{thebibliography}{\ \ \ \ }
\bibitem[HM17]{hirscher} Hirscher, Timo and Anders Martinsson,  Segregating Markov chains, manuscript, arXiv:1510.036661v2 [math.PR], 2017.
\bibitem[MU05]{mitzen} Mitzenmacher, Michael and Eli Upfal, {\em Probability and Computing: Randomized Algorithms and
Probabilistic Analysis}, Cambridge University Press, 2005.
\bibitem[No97]{norris} Norris, J.R., {\em Markov Chains}, Cambridge University Press, 1997. Paperbackk edition 1998, 15th
printing 2009.
\bibitem[LPW09]{peres} Levin, David A., Yuval Peres and Elizabeth L. Wilmer, {\em Markov Chains and Mixing Times},
American Mathematical Society, 2009.
\bibitem[Ro97]{rosenthal} Rosenthal, Jeffrey S., Faithful Couplings of Markov Chains: Now Equals Forever, Advances in
Applied Mathematics, Vol. 18, pp. 372--381, 1997.   
\end{thebibliography}
\end{document}